\documentclass[lettersize,journal]{IEEEtran}
\usepackage{amsmath,amsfonts}
\usepackage{algorithm, algorithmicx, algpseudocode}
\usepackage{xcolor}
\usepackage{etoolbox}
\usepackage{array}
\usepackage{amsthm}
\usepackage[caption=false,font=normalsize,labelfont=sf,textfont=sf]{subfig}
\usepackage{textcomp}
\usepackage{stfloats}
\usepackage{url}
\usepackage{verbatim}
\usepackage{graphicx}
\usepackage{cite}
\usepackage{multirow}
\usepackage{graphicx}
\usepackage{anyfontsize}
\usepackage{makecell}
\usepackage{mathrsfs}
\usepackage{booktabs}
\usepackage{xcolor}
\usepackage{makecell}
\usepackage{bm}
\usepackage{etoolbox} 
\newtheorem{remark}{Remark}
\begin{document}

\title{Sensor Deployment Optimization for Passive TDOA Localization Under Unknown Drift Distribution}

\author{Zhenxing Zhang, Tianxian Zhang, \IEEEmembership{Member, IEEE},  Zerui Zhang, Zicheng Wang, Xueting Li
\thanks{Zhenxing Zhang, Tianxian Zhang,  Zerui Zhang, and Zicheng Wang  are with  the School of Electronic Engineering, University of Electronic Science and Technology of China, Chengdu 611731, China (e-mail:
	zxingzhang0512@163.com; tianxianzhang@gmail.com; 202311012208@std.uestc.edu.cn; 375756407@qq.com).   \textit{(Corresponding author: Tianxian Zhang.)}.
	
}}

\markboth{Journal of \LaTeX\ Class Files,~Vol.~14, No.~8, August~2021}%
{Shell \MakeLowercase{\textit{et al.}}: A Sample Article Using IEEEtran.cls for IEEE Journals}


\maketitle

\begin{abstract}
This paper investigates how to deploy sensors offline to provide robust passive TDOA  localization accuracy across the entire region of interest (ROI) when their positions are subject to drift errors caused by factors such as wind. Since in practice only the 1st and 2nd order statistics of sensor drift errors can be estimated from historical sensor telemetry data or wind field statistics, by using them we first derive a generalized geometric dilution of precision under drift errors ($\mathrm{GDOP_{D}}$),  which extends the traditional GDOP ($\mathrm{GDOP_{T}}$). Furthermore, we derive theoretical results related to  $\mathrm{GDOP_{D}}$ and $\mathrm{GDOP_{T}}$,  revealing that drift errors not only enlarge the value of GDOP but also reshape its distribution, thereby degrading localization performance. Then, we construct a  $\operatorname{{GDOP}_{D}}$-based min-max deployment optimization problem.  {Finally,  we propose an adaptive unidirectional particle swarm optimizer (AUPSO) to solve this challenging problem. The proposed method alleviates the premature convergence and the oscillatory behavior of the traditional PSO. Extensive simulations demonstrate the effectiveness of the proposed method.  This research provides a reliable offline sensor deployment planning framework for practical engineering scenarios, when the accurate drift error probability density function is not available.}
\end{abstract}

\begin{IEEEkeywords}
Passive TDOA localization, drift errors, sensor deployment, geometric dilution of precision (GDOP),  particle swarm optimization (PSO).
\end{IEEEkeywords}

\section{Introduction}
\IEEEPARstart{I}{n}  modern wireless sensing systems, passive localization has been widely used in civil applications \cite{liu2018range} and electronic warfare \cite{ammar2024robust}, owing to its low power consumption and high concealability \cite{8675389}. Among various techniques, Time Difference of Arrival (TDOA)-based passive localization is particularly attractive due to its advantage of not requiring absolute time synchronization \cite{6338257}. TDOA estimates the target’s location by measuring the differences of signal arrival times across multiple distributed sensors \cite{ho1993solution}. However, studies have shown that the TDOA localization accuracy is highly sensitive to the sensor deployment \cite{phruksahiran2024analysis, cui2025spatial, meng2013optimality, lee2023pso, zheng2007accurate}. Therefore, optimizing sensor deployment at the planning stage is critical for achieving accurate TDOA localization. This work focuses on the pre-deployment design of sensors for area coverage tasks such as border surveillance or environmental monitoring. In these application scenarios, sensors are deployed in advance (e.g., UAVs are assigned to a predetermined standby formation). Although the target’s location is unknown and uncontrollable, the sensor layout can be optimized to ensure robust localization performance over the entire Region of Interest (ROI).

Recently, some studies considered sensor deployment over the entire ROI. For example,  Aubry et al. \cite{10081431} proposed a robust framework for optimizing sensor deployment over ROI. Diez et al. \cite{diez2022analysis} explored a reliable sensor under fault conditions. To address Non-Line-of-Sight (NLoS) errors, Zhao et al. \cite{zhao2022finding} developed a deployment method for Ultra-Wideband (UWB) TDOA antennas. Wei et al.\cite{wei2025sensor} considered measurement information anomalies and optimized sensor deployment to improve system tolerance. Wang et al. \cite{wang2022geometric}  focused on accurate localization by deploying sensors in multiple key subareas. {Although these studies have advanced the field, they assume position errors follow known Probability Density Functions (PDFs), typically zero-mean Gaussian, through the 1st and 2nd order statistics.}  {While analytically convenient, such assumptions may not hold in many application scenarios, such as aerial swarm monitoring, where sensor positions may drift\footnote{It may be hard to maintain sensor's position as commanded.} due to the wind \cite{lin2022efficient, jana2022cnn}.} {Chen et al. \cite{chen2018ules} introduced an evaluation scheme, ULES, that selects the most reliable beacon nodes for localization based on multi-criterion decision-making, including drift indicators. This online screening strategy effectively reduces the influence of drifted nodes during the localization process. However, such methods are inherently reactive, i.e., they compensate for drift after it has occurred, rather than proactively designing the sensors to be inherently robust to drift from the outset. When sensor deployment is optimized under idealized or oversimplified assumptions, it can lead to uncontrolled localization degradation, especially in ROIs that are originally designed for high accuracy. This highlights the need to explore proactive sensor deployment strategies that are robust to drift errors at the system design stage. Although the PDF of sensor drift errors is unknown in practice, the 1st and 2nd order statistics may be estimated from the system logs  \cite{gupta2022landing, richter1995estimating}, which can be leveraged in a proactive design.}

{Another crucial step in achieving optimal sensor deployment is to provide an efficient solution method. Usually, the GDOP-based sensor deployment optimization problem is inherently non-convex due to the nonlinear dependency on sensor-target geometry \cite{wang2022geometric}. The problem becomes even more complex when drift errors are involved. Consequently, a robust and flexible optimization method is needed. Existing methods for such problems can be broadly categorized into two paradigms: deterministic methods and meta-heuristic methods. Deterministic methods aim to find solutions through systematic and rule-based procedures. For the highly non-convex sensor deployment problem, a prevalent strategy is to reformulate the original problem into a more tractable form. This often involves convex relaxations (e.g., semidefinite relaxation) or geometric approximations to obtain a surrogate problem that can be solved with guarantees \cite{10122889, chen2011antenna, sun2014antenna}. However,  the accuracy and feasibility of the resulting solution depend critically on the validity of the introduced approximations or relaxations, which may not always hold under complex, distribution-agnostic drift error models that we consider. In contrast, meta-heuristic methods are derivative-free and guided by population-based search and stochastic operators. Their principal strength lies in directly tackling the original and unmodified problem formulation, making them particularly suitable for non-convex problems. Among these methods, the Particle Swarm Optimization (PSO) method \cite{kennedy1995particle} has gained wide adoption in sensor deployment optimization due to its conceptual simplicity and strong empirical performance \cite{ zhang2024optimal, xia2021improved, zhang2022efficient, tang2008study}.  Nevertheless, PSO suffers from two major drawbacks that undermine its optimization efficiency: 1) In each iteration, all particles are updated based on the global best position. However,  the global best position may be unreliable during early iterations,  leading to premature convergence \cite{kennedy1995particle};
2) Each particle is guided by both its personal best position and the global best position, which tends to cause oscillatory behavior.
These limitations often result in suboptimal solutions with high variance across independent trials, thereby restricting the practical applicability of PSO.}

Motivated by the above considerations, this work aims to provide a robust sensor pre-deployment strategy for TDOA localization in the ROI under drift error. The main contributions of this paper are summarized as follows:

\begin{itemize}
	\item[1)] We derive a rigorous formulation of the GDOP under sensor drift error, termed $\operatorname{GDOP_D}$, assuming only the 1st and 2nd order statistics of the drift error are available. (Section \ref{Passive TDOA with drift error})
	
	\item[2)] We establish the mathematical relationship between $\operatorname{GDOP_D}$ and the traditional GDOP ($\operatorname{GDOP_T}$), proving that $\operatorname{GDOP_D}$ generalizes $\operatorname{GDOP_T}$. Besides, we theoretically analyze the impact of drift error on localization geometry. Specifically, we prove that the drift error not only induces the non-uniform scaling of GDOP but also causes monotonic warping and asymptotic saturation of the GDOP contours. (Section \ref{Analysis of the characteristics})
	
	\item[3)] We formulate a min-max non-convex optimization problem to minimize the worst-case $\operatorname{GDOP_D}$ over the entire ROI. (Section \ref{optimization model})
	
	\item[4)] {We propose an Adaptive Unidirectional Particle Swarm Optimizer (AUPSO). Its key innovations include: i) A novel unidirectional update rule that dynamically blends global and competitively selected personal best information to suppress oscillations, and ii) An adaptive factor $\beta$ that reduces early reliance on the global best position to prevent premature convergence. These mechanisms achieve a superior exploration-exploitation balance, enabling AUPSO to reliably solve the challenging sensor deployment problem.} (Section \ref{atspso})
\end{itemize}

The rest of the paper is organized as follows: Section \ref{SYSTEM MODEL} presents the system model. Section \ref{atspso} introduces the proposed AUPSO. Simulation results are provided in Section \ref{Experimental Simulation}, and Section \ref{Conclusion} concludes the paper. This paper adopts the commonly used notations, where lowercase letters represent scalars, bold lowercase letters represent vectors, and bold uppercase letters represent matrices. Besides, $(*)^\top$, $\operatorname{tr}(*)$, $\operatorname{rank}(*)$ denote the transpose operation, matrix trace, and matrix rank, respectively.

\section{SYSTEM MODEL} \label{SYSTEM MODEL}
In this section, we first introduce the application scenario of passive TDOA  localization under sensor drift error and derive the corresponding $\operatorname{GDOP_D}$ metric. Next,  we theoretically analyze the impact of the drift errors on GDOP. Finally, a min-max non-convex sensor deployment optimization problem over an ROI is established. A list of math symbols used hereinafter is given in Table \ref{table:notation}.

\begin{table*}[htbp]
	\caption{LIST OF NOTATIONS}\label{table:notation}
	\centering
	{ \fontsize{8}{10}
		\begin{tabular}{cc||cc}
		\toprule
			\textbf{Notation} &  \textbf{Description} & \textbf{Notation} & \textbf{Description}  \\ 
			\midrule
			$M$ & Number of sensors& $c$  &  The light speed in air \\ 
			$\mathcal{A}$ &   deployment region & ${r}_{1}(\mathbf{\hat{x}}_{l},{\mathbf{s}_{1}})$ & \thead{The theoretical range-difference between \\ the first sensor and the \textit{l}th resolution} \\ 
			$\mathcal{B}$ &  ROI & 	$\mathbf{P} $  &  The matrix of localization error \\
			$\mathbf{s}_{m}$ & The commanded deployment position of the $m$th sensor & 	$\operatorname{GDOP_D}(\mathbf{\hat{x}}_l, \mathbf{S})$ & The GDOP at $\mathbf{\hat{x}}_l$  with drift error \\ 
			$\mathbf{\Delta s}_{m}^{mea}$ & The position measurement error of the $m$th sensor &	$\operatorname{GDOP_T}(\mathbf{\hat{x}}_l, \mathbf{S})$ & The traditional GDOP at $\mathbf{\hat{x}}_l$  \\ 
			$\mathbf{\Delta s}_{m}^{dri}$ & The position drift error of the $m$th sensor & $\mathbf{J}_n$  & Velocity of the $n$th particle  \\ 
			${\mathbf{\tilde{s}}_m}$ & The post-drift position of the $m$th sensor & $\mathbf{R}_n$ & Position of the $n$th particle \\
			$\mathbf{u}_{m}^{dri}$ & The  means  of the drift error of the $m$th sensor & $\mathbf{Pb}_n$  &  Personal best position of the $n$th particle \\ 
			$\bm{\Sigma}_{m}^{dri}$ & The covariance  of the drift error of the $m$th sensor  &  $w$ & The inertia weight\\
			$ \mathbf{\hat{x}}_l $ &  The position of the $l$th target & $c_1$,$c_2$ & The social acceleration coefficients \\ 
			
			$\mathbf{d\hat{x}}_{l}$  & The target localization error & $\mathbf{G}$ & The global best
			position  \\ 
			${r}_{m}(\mathbf{\hat{x}}_{l}, {\mathbf{s}_{m}})$  & \thead{ The distance between the $m$th sensor \\ and the \textit{l}th target}  & $K_{max}$ &  The maximum number
			of iterations \\ 
			${r}_{m,1}({\mathbf{\hat{x}}}_{l}, {\mathbf{s}}_{m}, {\mathbf{s}}_{1})$ & \thead{The theoretical range-difference between \\ the $m$th sensor and the first sensor at \textit{l}th target} & $N$ &  The total number of particles \\ 
			$\tau _{m,1}$ & \thead{ The theoretical time-difference between \\ the $m$th sensor and the first sensor}  & $\xi$ & Random number obeying a uniform distribution within $[0,1]$ \\ 
			$\mathrm{d} \tau _{m,1}$ & The time-difference measurement error & $\mathbf{Pb}_{\zeta}$ & Random $\mathbf{Pb}_n$ selected from the swarm \\ 
						\bottomrule
		\end{tabular}
	}
\end{table*}

\subsection{Application Scenario Description}
Without loss of generality, we consider a three-dimensional localization application scenario. A group of $M+1$ sensors is commanded to be deployed within a predefined deployment region $\mathcal{A}$ to conduct passive TDOA localization over an ROI $\mathcal{B}$.   {Each sensor determines its own position (e.g., via GPS) to enable TDOA calculation. This self-reported position (post-drift position), denoted by $\tilde{\mathbf{s}}_m$, inevitably deviates from the sensor's intended (or commanded) deployment position $\mathbf{s}_m$. We model the total deviation $\mathbf{ds}_m = \tilde{\mathbf{s}}_m - \mathbf{s}_m$ as the sum of two independent error sources\footnote{{The assumption of independence between error sources is a common modeling simplification, justified by their distinct physical origins: the measurement error $\bm{\Delta \mathbf{s}}_{m}^{mea}$ arises from the internal characteristics of the positioning system (e.g., receiver noise in GPS), whereas the drift error $\bm{\Delta \mathbf{s}}_{m}^{dri}$ is induced by external and persistent forces (e.g., wind). Notably, our core framework and the derived $\operatorname{GDOP_D}$ metric depend on the statistics of the composite error $\mathrm{d}\mathbf{s}_m$. If a future work or a specific application scenario provides evidence of correlation between $\bm{\Delta \mathbf{s}}_{m}^{mea}$ and $\bm{\Delta \mathbf{s}}_{m}^{dri}$, it can be directly incorporated by modifying the off-diagonal blocks of the combined covariance matrix, without altering the proposed methodology.}}}:
\begin{enumerate}
	\item[i)] {Position measurement error $\mathbf{\Delta s}_{m}^{mea}$: The inherent noise of the positioning system (e.g., GPS receiver noise), typically modeled as zero-mean Gaussian. The covariance of the measurement error for the $m$th sensor is \cite{10081431}
	\begin{equation} \label{position error}
		\bm{\Sigma}_m^{mea} = \sigma _{p}^2\text{diag}\left(1,1,1\right).
	\end{equation}}
	
	\item[ii)] {Drift error $\bm{\Delta \mathbf{s}}_{m}^{dri}$: A displacement caused by persistent environmental forces (e.g., wind) that prevent the sensor from perfectly maintaining its commanded deployment position $\mathbf{s}_m$. In practice, the exact PDF of $\mathbf{\Delta s}_m^{dri}$ is often unknown. However, its 1st and 2nd  statistics can be reliably estimated from historical sensor telemetry data or environmental models (e.g., wind field statistics) \cite{gupta2022landing}. We denote the empirical mean and covariance of $ \mathbf{\Delta s}_{m}^{dri}$ as  $\mathbf{u}_{m}^{dri}$ and $\bm{\Sigma}_{m}^{dri}$, respectively:
	\begin{equation}
		\mathbf{u}_{m}^{dri} = (\mu _{m,x}^{dri}, \mu _{m,y}^{dri}, \mu _{m,z}^{dri})^{\top},
	\end{equation}
	\begin{equation}
		\begin{aligned}
			&	\boldsymbol{\Sigma}{_{m}^{dri}}=\\
			& \left[ \begin{matrix}
				{{\left( \sigma _{m,x}^{dri} \right)}^{2}} & {{\rho }_{m,xy}}\sigma _{m,x}^{dri}\sigma _{m,y}^{dri} & {{\rho }_{m,xz}}\sigma _{m,x}^{dri}\sigma _{m,z}^{dri}  \\
				{{\rho }_{m,xy}}\sigma _{m,x}^{dri}\sigma _{m,y}^{dri} & {{\left( \sigma _{m,y}^{dri} \right)}^{2}} & {{\rho }_{m,yz}}\sigma _{m,y}^{dri}\sigma _{m,z}^{dri}  \\
				{{\rho }_{m,xz}}\sigma _{m,x}^{dri}\sigma _{m,z}^{dri} & {{\rho }_{m,yz}}\sigma _{m,y}^{dri}\sigma _{m,z}^{dri} & {{\left( \sigma _{m,z}^{dri} \right)}^{2}}  \\
			\end{matrix} \right],
		\end{aligned}
	\end{equation}
where $\rho_{m,xy}, \rho_{m,xz}, \rho_{m,yz}$ are the directional correlation coefficients of the drift error, allowing the model to capture the anisotropic drift errors in different directions. For conciseness, we define $\bm{\rho}_{m} = [\rho_{m,xy}, \rho_{m,xz}, \rho_{m,yz}]^\top$ and $\bm{\sigma}^{dri}_{m} = [\sigma _{m,x}^{dri}, \sigma _{m,y}^{dri},\sigma _{m,z}^{dri}]^\top$.}
\end{enumerate}

{Thus, the actual position used in TDOA calculation is $\tilde{\mathbf{s}}_m = \mathbf{s}_m + \mathbf{ds}_m$, where the composite error $\mathbf{ds}_m = \bm{\Delta \mathbf{s}}_{m}^{mea} + \bm{\Delta \mathbf{s}}_{m}^{dri}$ distorts the perceived geometry of the sensors, thereby degrading target localization accuracy.}

{Our objective is to optimize the commanded deployment positions $\mathbf{S} = [\mathbf{s}_{1},...,\mathbf{s}_{M+1}]^\top$, given the 1st and 2nd order statistics of $\mathbf{ds}_m$, to ensure robust target localization across the entire ROI $\mathcal{B}$.} Assume the $l$th target appears at $\hat{\mathbf{x}}_l = (\hat{x}_l, \hat{y}_l, \hat{z}_l)^{\top} \in \mathcal{B}$. The traditional $\operatorname{GDOP_T}(\hat{\mathbf{x}}_l, \mathbf{S})$ quantifies localization sensitivity to measurement noise and geometry, but it relies on the assumption of perfectly known sensor positions. To incorporate the effect of  $\mathbf{ds}_m$, we derive a generalized, drift-aware  $\operatorname{GDOP_D}(\hat{\mathbf{x}}_l, \mathbf{S})$ metric, which will be calculated below.

\begin{remark}
{Like the $\operatorname{GDOP_T}(\hat{\mathbf{x}}_l, \mathbf{S})$,   the derivation of the $\operatorname{GDOP_D}(\hat{\mathbf{x}}_l, \mathbf{S})$ metric relies solely on the 1st and 2nd order statistics, i.e., mean and covariance of the errors. No specific PDF  is assumed for the drift error $\mathbf{\Delta s}_m^{dri}$. This makes the subsequent optimization framework inherently distributionally robust, as it is valid for any error distribution sharing the same statistics.} 
\end{remark}

\subsection{$\operatorname{GDOP_D}( \mathbf{\hat{x}}_l, \mathbf{{S}})$ of Passive TDOA  Localization}\label{Passive TDOA with drift error}

In the ideal condition, the distance between the $m$th sensor and the ${l}$th target is
\begin{equation} \label{distance}
	{r}_{m}(\mathbf{\hat{x}}_{l}, {\mathbf{s}_{m}})={{\left\| \mathbf{\hat{x}}_l-{\mathbf{s}_{m}} \right\|}_{2}}, m = 1,2,...,M+1,
\end{equation}
where $\left\| \bullet \right\|_{2}$ is ${l_{2}}$-norm.

Without loss of generality, the first sensor $\mathbf{s}_{1}$ is selected as the reference, then the theoretical range-difference ${r}_{m,1}$ between the $m$th sensor and the reference sensor is
\begin{equation} \label{range-differe}
	\begin{aligned}
		{r}_{m,1}({\mathbf{\hat{x}}}_{l}, {\mathbf{s}}_{m}, {\mathbf{s}}_{1})&={r}_{m}(\mathbf{\hat{x}}_{l}, {\mathbf{s}_{m}})-{r}_{1}(\mathbf{\hat{x}}_{l},
		{\mathbf{s}_{1}}),\\
		&m = 2,...,M+1.
	\end{aligned}
\end{equation}
Thus, the theoretical time-difference 	$\tau _{m,1}$  is
\begin{equation} \label{cc}
	\tau _{m,1}=\frac{1}{c}{{r}_{m,1}({\mathbf{\hat{x}}}_{l}, {\mathbf{s}}_{m}, {\mathbf{s}}_{1})}		,m = 2,...,M+1,
\end{equation}
{where $c$ is the light speed in air.}

In practical TDOA localization,  the target localization error, denoted as $\mathbf{d\hat{x}}_{l} = {{\left[ \text{d}\hat{x}_{l},\text{d}\hat{y}_{l},\text{d}\hat{z}_{l} \right]}^{\top}}$, arises from time-difference measurement error\footnote{In modern TDOA networks, macroscopic long-term clock drifts are typically eliminated by synchronization protocols. Therefore, the residual $\mathrm{d}\tau_{m,1}$ primarily encompasses the nanosecond-level short-term timing jitter induced by severe mechanical vibrations. Modeling this transient jitter purely as a zero-mean noise captures these physical interferences while preserving the distribution-free nature of the analytical framework \cite{jia2022composite}.} $\mathrm{d} \tau _{m,1}$ (typically modeled as a zero-mean noise with standard deviation $\sigma _{m}^{tim}$) and the $\mathbf{{ds}}_{m}$ \cite{ho2007source}. Taking these errors into account, we can get
\begin{equation} \label{error equation2}
	\begin{aligned}
		c \text{d}{{\tau }_{m,1}}=& {r}_{m}(\mathbf{\hat{x}}_{l}+\mathbf{d\hat{x}}_{l}, {\mathbf{s}_{m}}+\mathbf{{ds}}_{m})-{r}_{m}(\mathbf{\hat{x}}_{l}, {\mathbf{s}_{m}})\\
		-&\left({r}_{1}(\mathbf{\hat{x}}_{l}+\mathbf{d\hat{x}}_{l}, {\mathbf{s}_{1}}+\mathbf{{ds}}_{1})-{r}_{1}(\mathbf{\hat{x}}_{l}, {\mathbf{s}_{1}})\right),\\
		& m=2,...,M+1.
	\end{aligned}
\end{equation}
{When $\mathbf{d\hat{x}}_{l}$, $\mathbf{{ds}}_{m}$ and $\mathbf{ds}_{1}$ are small, we perform the first-order Taylor expansion of  ${r}_{1}(\mathbf{\hat{x}}_{l}+\mathbf{d\hat{x}}_{l}, {\mathbf{s}_{1}}+\mathbf{ds}_{1})$ at $\mathbf{\hat{x}}_{l}$ and $\mathbf{s}_{1}$, and the first-order Taylor expansion of ${r}_{m}(\mathbf{\hat{x}}_{l}+\mathbf{d\hat{x}}_{l}, {\mathbf{s}_{m}}+\mathbf{{ds}}_{m})$ at $\mathbf{\hat{x}}_{l}$ and ${\mathbf{s}_{m}}$ \cite{yang2022optimal,xu2025optimal}. Then, we  get 
\begin{equation} 	\label{error equation2.1}
	\begin{aligned}
		c\mathrm{d} \tau _{m,1}  \approx &   \left(\frac{\partial {{r}_{m}(\mathbf{\hat{x}}_{l}, {\mathbf{s}_{m}})}}{\partial \mathbf{\hat{x}}_{l}}-\frac{\partial {{r}_{1}(\mathbf{\hat{x}}_{l}, {\mathbf{s}_{1}})}}{\partial \mathbf{\hat{x}}_{l}}\right)^{{\top}}\mathbf{d\hat{x}}_{l}\\
		&+ \left(\frac{\partial {{r}_{m}(\mathbf{\hat{x}}_{l}, {\mathbf{s}_{m}})}}{\partial \mathbf{s}_{m}}\right)^{\top}\mathbf{ds}_{m} - \left(\frac{\partial {{r}_{1}(\mathbf{\hat{x}}_{l}, {\mathbf{s}_{1}})}}{\partial \mathbf{s}_{1}}\right)^{\top}\mathbf{ds}_{1},\\
		&m = 2,...,M+1,
	\end{aligned}
\end{equation}
where $ \frac{\partial {{r}_{m}(\mathbf{\hat{x}}_{l}, {\mathbf{s}_{m}})}}{\partial \mathbf{\hat{x}}_{l}} = - \frac{\partial {{r}_{m}(\mathbf{\hat{x}}_{l}, {\mathbf{s}_{m}})}}{\partial \mathbf{{s}}_{m}}$.}

{Denote $\mathbf{a}_m = \frac{\partial {{r}_{m}(\mathbf{\hat{x}}_{l}, {\mathbf{s}_{m}})}}{\partial \mathbf{{s}}_{m}}$, then (\ref{error equation2.1}) becomes
\begin{equation} \label{error equation2.3}
	\begin{aligned}
		&	c\mathrm{d} \tau _{m,1}  \approx (\mathbf{a}_1^{{\top}}-\mathbf{a}_m^{{\top}})\mathbf{d\hat{x}}_{l}+ \mathbf{a}_m^{\top}\mathbf{ds}_{m} - \mathbf{a}_1^{\top}\mathbf{ds}_{1},\\
		& \quad \quad \quad \quad\quad\quad\quad\quad \quad\quad\quad \quad\quad \quad\quad m = 2,...,M+1.
	\end{aligned}
\end{equation}
Writing (\ref{error equation2.3}) in matrix form, we have
\begin{equation}
	\mathbf{dv}=\mathbf{C}\mathbf{d\hat{x}}_{l}+\mathbf{dx_s}, 
\end{equation}
where 
\begin{equation} \label{CCC}
	\mathbf{C}=	{{\left[ \begin{matrix}
				\mathbf{a}_1^{{\top}}-\mathbf{a}_2^{{\top}} \\
				\vdots    \\
				\mathbf{a}_1^{{\top}}-\mathbf{a}_{M+1}^{{\top}}   \\
			\end{matrix} \right]}_{M\times3}},
\end{equation}
\begin{equation} \label{CCC1}
	\mathbf{dv}=c{{\left[ \text{d}\tau _{2,1}^{{}},...,\text{d}\tau _{M+1,1}^{{}} \right]}^{\top}},
\end{equation}
\begin{equation} \label{CCC2}
	\mathbf{dx_s}= {{\left[ \begin{matrix}
				\mathbf{a}_2^{\top}\mathbf{ds}_{2}- \mathbf{a}_1^{\top}\mathbf{ds}_{1} \\
				\vdots    \\
				\mathbf{a}_{M+1}^{\top}\mathbf{ds}_{M+1} - \mathbf{a}_1^{\top}\mathbf{ds}_{1}   \\
			\end{matrix} \right]}_{M\times3}}.
\end{equation}}

Choose suitable sensors so as to make $\operatorname{rank}(\textbf{C})=3$, and then the localization error can be obtained as 
\begin{equation}
	\mathbf{d\hat{x}}_{l}={{\left( {{\mathbf{C}}^{\top}}\mathbf{C} \right)}^{-1}}{{\mathbf{C}}^{\top}}\left[ \mathbf{dv}-\mathbf{dx_s} \right].
\end{equation}
To facilitate derivation, let $\textbf{D}$ be $({\mathbf{C}}^{\top}\mathbf{C})^{-1}{\mathbf{C}}^{\top}$. {Besides, the time-difference measurement error, position measurement error, and drift error are assumed mutually independent—a common and reasonable simplification given their distinct physical origins (signal processing, navigation system, and environmental forces, respectively)}. Under this assumption, the matrix of localization error is calculated as 
\begin{equation}
	\begin{aligned}
		\mathbf{P} & \stackrel{\Delta}{=}  E\left( 	\mathbf{d\hat{x}}_{l}{{\left( 	\mathbf{d\hat{x}}_{l} \right)}^{\top}} \right)\\
		&=\mathbf{D}\left( E\left( \mathbf{dv} {{\left( \mathbf{dv} \right)}^{\top}} \right)+E\left( \mathbf{dx_s} {{\left( \mathbf{dx_s} \right)}^{\top}} \right) \right)\mathbf{D}^\top.
	\end{aligned}
\end{equation}
{The $\operatorname{GDOP_D}(\mathbf{\hat{x}}_l, \mathbf{S})$ is 
\begin{equation} \label{gdop_t}
	\operatorname{GDOP_D}(\mathbf{\hat{x}}_l, \mathbf{S})
	=\sqrt{\operatorname{tr}\left(\mathbf{P}  \right)}.
\end{equation}}

If all measurements are gathered and processed at the fusion center,  all time-difference measurement errors are related to each other at the fusion center. Assuming that the standard deviation of each time-difference measurement error is the same, i.e., $\sigma _{m}^{tim} = \sigma _{\tau}, m=2,..., M+1$, and that any pair of time-difference measurement errors share a common correlation coefficient $\eta$ of TDOA measurement errors  \cite{xie2018analysis}. Then, $E( \mathbf{dv}(\mathbf{dv})^{\top})$ is
\begin{equation}
	E( \mathbf{dv}(\mathbf{dv})^{\top}) = 	({c}\sigma_{\tau})^2\left((1-\eta)\mathbf{I}_{M\times M}+\eta \mathbf{L}_{M\times M}\right).
\end{equation}
where $\mathbf{I}_{M\times M}$ is the $M \times M$ identity matrix, $\mathbf{L}_{M\times M}$ is the ${M \times M}$ square matrix whose elements are all 1.

{Besides, $E\left( \mathbf{dX_s}{{\left( \mathbf{dX_s} \right)}^{\top}} \right)$ is
\begin{equation} \label{dX}
	E\left( \mathbf{dX_s}{{\left( \mathbf{dX_s} \right)}^{\top}} \right) = 	\sigma _{p}^{2}\left( {{\mathbf{I}_{M\times M}}}+{{\mathbf{L}_{M\times M}}} \right)  + \mathbf{F_1}+\mathbf{F}_2,
\end{equation}
where
\begin{equation}
	\begin{aligned}
		&{{\mathbf{F}}}_1=\left(\mathbf{a}_{1}^{\top}\boldsymbol{\Sigma}_{1}^{dri}\mathbf{a}_{1}\right)\mathbf{L}_{M\times M}+ \boldsymbol{\Lambda}_{M\times M},\\
		&	{\mathbf{F}}_2=\mathbf{b}\mathbf{b}^{\top}.
	\end{aligned}
\end{equation}
Here
\begin{equation}
	\begin{aligned}
		&	\boldsymbol{\Lambda} = \text{diag}\left( \mathbf{a}_{2}^{\top}\boldsymbol{\Sigma}{_{2}^{dri}}\mathbf{a}_{2},...,\mathbf{a}_{M+1}^{\top}\boldsymbol{\Sigma}{_{M+1}^{dri}}\mathbf{a}_{M+1} \right),\\
		&\mathbf{b} = (b_1,...,b_{M})^{\top},\\
		&	b_m  = \mathbf{a}_{m+1}^{\top}\mathbf{u}_{m+1}^{dri}-\mathbf{a}_{1}^{\top}\mathbf{u}_{1}^{dri}, m=1,...,M.
	\end{aligned}
\end{equation}
The detailed derivation is shown in Appendix.}

{For convenience, let ${{\mathbf{H}_{\tau}}} = E( \mathbf{dv}(\mathbf{dv})^{\top})$, ${{\mathbf{H}_{p}}} =\sigma _{p}^{2}\left( {{\mathbf{I}_{M\times M}}}+{{\mathbf{L}_{M\times M}}} \right)$, $\mathbf{P}$ is  then calculated as 
\begin{equation}
	\mathbf{D}\left( {\mathbf{H}}_{\tau}+{\mathbf{H}}_{p}+{\mathbf{F}}_1+{\mathbf{F}}_{2} \right){{\mathbf{D}}^{\top}}.
\end{equation}
Thus, $\operatorname{GDOP_D}(\mathbf{\hat{x}}_l, \mathbf{S})$ is  
\begin{equation} \label{gdop_t}
	\operatorname{GDOP_D}(\mathbf{\hat{x}}_l, \mathbf{S}) =\sqrt{\operatorname{tr}\left( \mathbf{D}\left( {{\mathbf{H}}_{\tau}}+{\mathbf{H}}_{p}+{\mathbf{F}}_1+{\mathbf{F}}_{2} \right){{\mathbf{D}}^{\top}} \right)}.
\end{equation}
Then, by some algebra, we obtain
\begin{equation} \label{tr1}
	\operatorname{tr}\left( \mathbf{D} {{\mathbf{H}}_{\tau}}\mathbf{D}^\top\right) = (c\sigma_{\tau})^2 \left(\left(1-\eta\right)\operatorname{tr}\left( \mathbf{D}\mathbf{D}^\top\right)+ \eta \operatorname{tr}\left( \mathbf{D} {{\mathbf{L}}}\mathbf{D}^\top\right) \right),\\
\end{equation}
\begin{equation} \label{tr2}
	\operatorname{tr}\left( \mathbf{D} {{\mathbf{H}}_{p}}\mathbf{D}^\top\right)= \sigma_{p}^{2} \left(\operatorname{tr}\left( \mathbf{D}\mathbf{D}^\top\right)+ \operatorname{tr}\left( \mathbf{D} {{\mathbf{L}}}\mathbf{D}^\top\right) \right),\\
\end{equation}
\begin{equation} \label{tr3}
	\operatorname{tr}\left( \mathbf{D} {{\mathbf{F}}_{1}}\mathbf{D}^\top\right)=  \left(\mathbf{a}_{1}^{\top}\boldsymbol{\Sigma}_{1}^{dri}\mathbf{a}_{1}\right)\operatorname{tr}\left( \mathbf{D} {{\mathbf{L}}}\mathbf{D}^\top\right)
	+ \operatorname{tr}\left( \mathbf{D} \boldsymbol{\Lambda}\mathbf{D}^\top\right),
\end{equation}
\begin{equation} \label{tr4}
	\operatorname{tr}\left( \mathbf{D} {{\mathbf{F}}_{2}}\mathbf{D}^\top\right)= \operatorname{tr}\left( \mathbf{D} \mathbf{b}\mathbf{b}^\top\mathbf{D}^\top\right).
\end{equation}
Thus, we can get the following proposition.}

\textbf{Proposition:}
{\begin{equation} \label{Proposition2}
	\begin{aligned}
		&\operatorname{GDOP^2_D}(\mathbf{\hat{x}}_l, \mathbf{S})\\
		\quad\quad&=	\left(\left(1-\eta\right)(c\sigma_{\tau})^2+\sigma_{p}^{2}\right)\operatorname{tr}\left( \mathbf{D}\mathbf{D}^\top\right) \\ 
		& \quad	+\left(\eta(c\sigma_{\tau})^2+\sigma_{p}^{2}+\mathbf{a}_{1}^{\top}\boldsymbol{\Sigma}{_{1}^{dri}}\mathbf{a}_{1}\right)\operatorname{tr}\left( \mathbf{D}{{\mathbf{L}}}\mathbf{D}^\top\right)\\
		&\quad+  \operatorname{tr}\left( \mathbf{D} \boldsymbol{\Lambda}\mathbf{D}^\top\right)+	\operatorname{tr}\left( \mathbf{D}\mathbf{b}\mathbf{b}^{\top}\mathbf{D}^{\top}\right). \\ 
	\end{aligned}
\end{equation}
The above proposition  reveals a mathematical coupling effect between the time-domain measurement noise and the spatial geometric topology. The  noise $(c\sigma_\tau)^2$ and the spatial drift penalties (e.g., $\mathbf{a}_{1}^{\top}\boldsymbol{\Sigma}{_{1}^{dri}}\mathbf{a}_{1}$) act as multipliers for the geometry trace matrices $\operatorname{tr}\left( \mathbf{D}\mathbf{D}^\top\right)$ and $\operatorname{tr}\left( \mathbf{D}{{\mathbf{L}}}\mathbf{D}^\top\right)$.  If the nominal sensor deployment $\mathbf{S}$ is geometrically fragile under wind-induced drift, these trace matrices will become ill-conditioned and spike exponentially, catastrophically amplifying both the vibration noise and drift penalties. This highlights the indispensable role of spatial optimization: by locking down the baseline geometry ($\mathbf{D}$) against unpredictable drifts, we can prevent severe transient measurement noise from being destructively amplified. In the following subsection, we provide a detailed discussion on how drift error affects the GDOP.

\subsection{Analysis of the Characteristics of $\operatorname{GDOP_{D}}(\mathbf{\hat{x}}_l, \mathbf{S})$} \label{Analysis of the characteristics}
{While it is intuitive that drift error degrades performance, the key questions are: How exactly does it distort the localization geometry? And what are the actionable insights for system design? The following theoretical analysis answers these questions by deriving exact closed-form relationships. Crucially, we reveal non-intuitive phenomena—such as the high drift-error sensitivity of high-accuracy regions and the warping of error contours—that fundamentally motivate our drift-aware optimization formulation and method design.}

{For the convenience of theoretical analysis,  let $\boldsymbol{\bm{\rho}}_{m} = [0, 0,0]^\top$, $\boldsymbol{\bm{\sigma}}^{dri}_{m} = \sigma_{d} \mathbf{e}$, where  $\mathbf{e} = (1,1,1)^\top$, for all $m = 1,...,M+1$. Then, (\ref{tr3})  becomes
\begin{equation} \label{tr3-1}
	\operatorname{tr}\left( \mathbf{D} {{\mathbf{F}}_{1}}\mathbf{D}^\top\right)=  \sigma^2_{d}\operatorname{tr}\left( \mathbf{D} {{\mathbf{L}}}\mathbf{D}^\top\right)
	+ \sigma^2_{d}\operatorname{tr}\left( \mathbf{D} \mathbf{D}^\top\right).
\end{equation}
Let $\mathbf{u}_{m}^{dri} = \mu_d\mathbf{e}$ for all $m = 1,...,M+1$, then ${\mathbf{F}}_{2}$ = $\mu^2_d\mathbf{C}\mathbf{e}\mathbf{e}^\top\mathbf{C}^\top$, thus (\ref{tr4}) becomes
\begin{equation} \label{tr4-1}
	\begin{aligned}
		\operatorname{tr}\left( \mathbf{D} {{\mathbf{F}}_{2}}\mathbf{D}^\top\right) = \mu _{d}^{2}\operatorname{tr}\left(  \mathbf{e}\mathbf{e}^{\top}\right) = 3 \mu _{d}^{2} .
	\end{aligned}
\end{equation}}

{Bringing (\ref{tr3-1}) and (\ref{tr4-1}) into (\ref{Proposition2}), we can get
\begin{equation} \label{Proposition2-2}
	\begin{aligned}
		\operatorname{GDOP^2_D}(\mathbf{\hat{x}}_l, \mathbf{S}) =&	\left(\left(1-\eta\right)(c\sigma_{\tau})^2+\sigma_{p}^{2}+\sigma _{d}^{2}\right)\operatorname{tr}\left( \mathbf{D}\mathbf{D}^\top\right) \\ 
		& +\left(\eta(c\sigma_{\tau})^2+\sigma_{p}^{2}+\sigma _{d}^{2}\right)\operatorname{tr}\left( \mathbf{D}{{\mathbf{L}}}\mathbf{D}^\top\right)+ 3\mu^2 _{d}. \\ 
	\end{aligned}
\end{equation}}

\newtheorem{thm}{\bf Corollary}
\begin{thm}\label{thm1} {The drift-aware  $\operatorname{GDOP_D}(\mathbf{\hat{x}}_l, \mathbf{S})$ generalizes the traditional $\operatorname{GDOP_T}(\mathbf{\hat{x}}_l, \mathbf{S})$ by explicitly incorporating the drift error. Consequently, the presence of drift error systematically degrades the geometric precision, i.e., $\operatorname{GDOP_D}(\mathbf{\hat{x}}_l, \mathbf{S}) > \operatorname{GDOP_T}(\mathbf{\hat{x}}_l, \mathbf{S})$ whenever the drift error is non-zero ($\sigma_d \neq 0$ or $\mu_d \neq 0$).}
\end{thm}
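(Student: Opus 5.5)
We work directly from the closed form (\ref{Proposition2-2}). The first task is to pin down $\operatorname{GDOP_T}(\mathbf{\hat{x}}_l,\mathbf{S})$ within the same framework. The traditional metric assumes the sensor positions are affected only by the zero-mean measurement error $\mathbf{\Delta s}_m^{mea}$ and not by drift. Formally, it is the special case $\bm{\Sigma}_m^{dri}=\mathbf{0}$ and $\mathbf{u}_m^{dri}=\mathbf{0}$ for all $m$, which in the isotropic setting means $\sigma_d=0$ and $\mu_d=0$. In that case $\mathbf{F}_1=\mathbf{0}$ and $\mathbf{F}_2=\mathbf{0}$, so $\mathbf{P}$ reduces to $\mathbf{D}(\mathbf{H}_\tau+\mathbf{H}_p)\mathbf{D}^\top$. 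Substituting into (\ref{Proposition2-2}) gives
\begin{equation*}
\operatorname{GDOP^2_T}(\mathbf{\hat{x}}_l,\mathbf{S})=\left((1-\eta)(c\sigma_\tau)^2+\sigma_p^2\right)\operatorname{tr}(\mathbf{D}\mathbf{D}^\top)+\left(\eta(c\sigma_\tau)^2+\sigma_p^2\right)\operatorname{tr}(\mathbf{D}\mathbf{L}\mathbf{D}^\top).
\end{equation*}
This is the classical TDOA GDOP with correlated timing errors and sensor-position noise, and it establishes the generalization claim: $\operatorname{GDOP_D}$ collapses to $\operatorname{GDOP_T}$ when the drift statistics vanish.

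Subtracting the two expressions yields the key identity
\begin{equation*}
\operatorname{GDOP^2_D}-\operatorname{GDOP^2_T}=\sigma_d^2\left(\operatorname{tr}(\mathbf{D}\mathbf{D}^\top)+\operatorname{tr}(\mathbf{D}\mathbf{L}\mathbf{D}^\top)\right)+3\mu_d^2.
\end{equation*}
The strict inequality then follows once we sign each trace term. For the first, $\operatorname{tr}(\mathbf{D}\mathbf{D}^\top)=\operatorname{tr}((\mathbf{C}^\top\mathbf{C})^{-1})$, and this is strictly positive because $\operatorname{rank}(\mathbf{C})=3$ makes $\mathbf{C}^\top\mathbf{C}$ positive definite, hence also its inverse. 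For the second, $\mathbf{L}=\mathbf{1}\mathbf{1}^\top$ with $\mathbf{1}$ the all-ones vector in $\mathbb{R}^M$, so $\operatorname{tr}(\mathbf{D}\mathbf{L}\mathbf{D}^\top)=\|\mathbf{D}\mathbf{1}\|_2^2\ge 0$.

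With both signs established, the case analysis is short. If $\sigma_d\neq0$, the first term is at least $\sigma_d^2\operatorname{tr}((\mathbf{C}^\top\mathbf{C})^{-1})>0$. If $\mu_d\neq0$, the term $3\mu_d^2>0$. In either case the difference of squares is strictly positive. Since both GDOPs are nonnegative square roots, taking square roots preserves the strict inequality, so $\operatorname{GDOP_D}>\operatorname{GDOP_T}$.

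The main point needing care is the mean term. It relies on (\ref{tr4-1}), where $\mathbf{D}\mathbf{C}=\mathbf{I}_3$ gives $\operatorname{tr}(\mathbf{D}\mathbf{F}_2\mathbf{D}^\top)=\mu_d^2\|\mathbf{e}\|^2$. We should double-check the sign convention $\mathbf{b}=-\mu_d\mathbf{C}\mathbf{e}$ (with $b_m=(\mathbf{a}_{m+1}-\mathbf{a}_1)^\top\mu_d\mathbf{e}$). The sign is harmless, since $\mathbf{F}_2=\mathbf{b}\mathbf{b}^\top$ is quadratic in $\mathbf{b}$.

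A more general remark, which I would add as the robust version of the argument, is that outside the isotropic setting it is still true that $\mathbf{F}_1\succeq0$ and $\mathbf{F}_2\succeq0$. Consequently $\operatorname{GDOP_D}\ge\operatorname{GDOP_T}$ holds always, and strictness requires $\mathbf{D}(\mathbf{F}_1+\mathbf{F}_2)\mathbf{D}^\top\neq0$. The isotropic computation above is exactly what guarantees that strictness.
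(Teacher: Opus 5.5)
Your proposal is correct and follows the paper's proof essentially step for step. You set the drift statistics to zero to recover the traditional metric, subtract to obtain $\sigma_d^2\left(\operatorname{tr}(\mathbf{D}\mathbf{D}^\top)+\operatorname{tr}(\mathbf{D}\mathbf{L}\mathbf{D}^\top)\right)+3\mu_d^2$, and sign the two traces; the paper uses the SVD of $\mathbf{C}$ where you use positive definiteness of $\mathbf{C}^\top\mathbf{C}$, which is only a cosmetic difference.

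Your version is slightly more careful than the paper's in two places. The paper asserts $\operatorname{tr}(\mathbf{D}\mathbf{L}\mathbf{D}^\top)=\|\mathbf{D}\mathbf{1}\|_2^2>0$, but this is only guaranteed to be $\ge 0$: it vanishes when $\mathbf{C}^\top\mathbf{1}=\mathbf{0}$. You correctly use $\ge 0$ and let $\operatorname{tr}((\mathbf{C}^\top\mathbf{C})^{-1})>0$ carry strictness. You also split cases explicitly, which properly handles the ``or'' in the statement that the paper's closing line phrases as ``$\sigma_d,\mu_d\neq 0$''.
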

\begin{IEEEproof}[Proof]
	When drift error is ignored, and only time-difference measurement and position measurement errors are considered,
	the $\operatorname{GDOP_{T}}(\mathbf{\hat{x}}_l, \mathbf{S})$ is
	\begin{equation} \label{GDOPT}
		\begin{aligned}
			\operatorname{GDOP^2_T}(\mathbf{\hat{x}}_l, \mathbf{S}) =&	\left(\left(1-\eta\right)(c\sigma_{\tau})^2+\sigma_{p}^{2}\right)\operatorname{tr}\left( \mathbf{D}\mathbf{D}^\top\right) \\ 
			& +\left(\eta(c\sigma_{\tau})^2+\sigma_{p}^{2}\right)\operatorname{tr}\left( \mathbf{D}{{\mathbf{L}}}\mathbf{D}^\top\right). \\ 
		\end{aligned}
	\end{equation}
	Combining (\ref{Proposition2-2}) and (\ref{GDOPT}) yields
	\begin{equation} \label{Proposition2-3}
		\begin{aligned}
			\operatorname{GDOP^2_D}(\mathbf{\hat{x}}_l, \mathbf{S})=&\operatorname{GDOP^2_T}(\mathbf{\hat{x}}_l, \mathbf{S})\\
			&+\sigma _{d}^{2}\left(\operatorname{tr}\left( \mathbf{D}\mathbf{D}^\top\right)+\operatorname{tr}\left( \mathbf{D}\mathbf{L}\mathbf{D}^\top\right)\right)+ 3\mu^2 _{d}.
		\end{aligned}
	\end{equation}
{	When $\operatorname{rank}(\textbf{C})=3$, the matrix \textbf{C} in (\ref{CCC}) is decomposed by singular value decomposition (SVD):
	\begin{equation} \label{CSVD}
		\mathbf{C} = \mathbf{V}_{C}\boldsymbol{\Sigma}_{C}\mathbf{U} _{C}^{\top},
	\end{equation}
	where $\mathbf{V}_{C}$ and $\mathbf{U}_{C}$ are $M\times M$ and $3\times 3$  unitary matrices, respectively, $\boldsymbol{\Sigma}_{C}$ is a flat $M\times 3$ diagonal matrix whose diagonal nonzero elements are  ${\lambda}_{i}>0, i =1,2,3$. Thus,  we can get
	\begin{equation} \label{DD}
		\begin{aligned}
			\operatorname{tr}\left( \mathbf{D}\mathbf{D}^\top\right) &=\operatorname{tr}\left({\left( {{\mathbf{C}}^{\top}}\mathbf{C} \right)}^{-1}\right)\\
			&	=\operatorname{tr}\left({{\mathbf{U}}_{C}} \left( \boldsymbol{\Sigma}_{C}^{\top} \boldsymbol{\Sigma}_{C} \right)^{-2}\mathbf{U}_{C}^{\top}\right)\\
			&=\sum^3_{i=1}{\lambda}^{-2}_{i} >0.
		\end{aligned}
	\end{equation} 
	Besides, we know that $\mathbf{L} = \mathbf{e}\mathbf{e}^\top$, where $\mathbf{e} = (1,...,1)^\top \in \mathbb{R}^M$. Thus, we can get
	\begin{equation} \label{DDds}
		\operatorname{tr}\left( \mathbf{D}\mathbf{L}\mathbf{D}^\top\right) = \operatorname{tr}\left( \mathbf{D}\mathbf{e}(\mathbf{D}\mathbf{e})^\top\right) =\left\| \mathbf{D}\mathbf{e} \right\|^2_{2}> 0.
	\end{equation} }
	
{	Bringing (\ref{DD}) and (\ref{DDds}) into (\ref{Proposition2-3}), we get $\operatorname{GDOP^2_D}(\mathbf{\hat{x}}_l, \mathbf{S})>\operatorname{GDOP^2_T}(\mathbf{\hat{x}}_l, \mathbf{S})$, when $\sigma _{d},\mu_d \neq 0$.}
\end{IEEEproof}

{System implication: The inequality in Corollary 1 provides a fundamental justification for our work: the sensor deployment optimized solely using the traditional $\operatorname{GDOP_T}$ metric may become inherently suboptimal when drift error is present, as it underestimates the true localization error. The degree of suboptimality is connected with the drift error, position measurement error, and the time-difference measurement error. Therefore, $\operatorname{GDOP_D}$ is not merely an alternative metric but a necessary one for robust system design under realistic conditions where drift errors cannot be neglected. }

\begin{thm}\label{thm2} Linear-scale transformation of GDOP under zero‑Mean drift error with $\mu _{d} = 0$, $\eta = 0.5$. If $\mu _{d} = 0$ and  $\eta = 0.5$, 
	$\operatorname{GDOP_{D}}(\mathbf{\hat{x}}_l, \mathbf{S})$ and  $\operatorname{GDOP_{T}}(\mathbf{\hat{x}}_l, \mathbf{S})$ exhibit a strict linear affine relationship with a fixed scale ratio  $\aleph$,
	\begin{equation}
		\aleph=	\sqrt{\frac{0.5(c\sigma_{\tau})^2+\sigma_{p}^{2}+\sigma _{d}^{2}}{0.5{{c}^{2}}\sigma_{\tau}^2+\sigma_{p}^{2}}}.
	\end{equation}
\end{thm}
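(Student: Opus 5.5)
The plan is to specialize the closed forms (\ref{Proposition2-2}) and (\ref{GDOPT}) to $\mu_d=0$ and $\eta=0.5$, and then show that both squared GDOPs are constant multiples of the same geometry-only quantity.

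With $\eta=0.5$, the coefficient of $\operatorname{tr}(\mathbf{D}\mathbf{D}^\top)$ in (\ref{GDOPT}) is $(1-\eta)(c\sigma_\tau)^2+\sigma_p^2 = 0.5(c\sigma_\tau)^2+\sigma_p^2$. The coefficient of $\operatorname{tr}(\mathbf{D}\mathbf{L}\mathbf{D}^\top)$ is $\eta(c\sigma_\tau)^2+\sigma_p^2$, which is the same number. Hence
\begin{equation*}
\operatorname{GDOP^2_T}(\mathbf{\hat{x}}_l,\mathbf{S}) = \left(0.5(c\sigma_\tau)^2+\sigma_p^2\right) T(\mathbf{\hat{x}}_l,\mathbf{S}),
\end{equation*}
where
\begin{equation*}
T(\mathbf{\hat{x}}_l,\mathbf{S}) \stackrel{\Delta}{=} \operatorname{tr}(\mathbf{D}\mathbf{D}^\top)+\operatorname{tr}(\mathbf{D}\mathbf{L}\mathbf{D}^\top).
\end{equation*}
The same collapse occurs in (\ref{Proposition2-2}) once $\mu_d=0$ removes the additive term $3\mu_d^2$: both coefficients become $0.5(c\sigma_\tau)^2+\sigma_p^2+\sigma_d^2$, giving
\begin{equation*}
\operatorname{GDOP^2_D}(\mathbf{\hat{x}}_l,\mathbf{S}) = \left(0.5(c\sigma_\tau)^2+\sigma_p^2+\sigma_d^2\right) T(\mathbf{\hat{x}}_l,\mathbf{S}).
\end{equation*}
The same conclusion follows from (\ref{Proposition2-3}) with $\mu_d=0$, since the drift increment there is exactly $\sigma_d^2 T$.

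By (\ref{DD}) and (\ref{DDds}) in the proof of Corollary \ref{thm1}, $T>0$ whenever $\operatorname{rank}(\mathbf{C})=3$. We may therefore divide, and the geometry-dependent factor $T(\mathbf{\hat{x}}_l,\mathbf{S})$ cancels. This gives $\operatorname{GDOP^2_D}/\operatorname{GDOP^2_T}=\aleph^2$, a ratio independent of $\mathbf{\hat{x}}_l$ and $\mathbf{S}$. Taking nonnegative square roots yields $\operatorname{GDOP_D}(\mathbf{\hat{x}}_l,\mathbf{S})=\aleph\,\operatorname{GDOP_T}(\mathbf{\hat{x}}_l,\mathbf{S})$ uniformly over $\mathcal{B}$. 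This is a linear relation through the origin, a special case of the claimed affine relation, and $\aleph\ge 1$ with equality iff $\sigma_d=0$.

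There is no real obstacle. The only step needing care is confirming that $\eta=0.5$ is exactly what makes the two trace coefficients coincide, so that a common factor $T$ exists. For general $\eta$ the two traces carry different weights and the ratio would depend on $\mathbf{\hat{x}}_l$. I would also note explicitly that the positivity of $0.5(c\sigma_\tau)^2+\sigma_p^2$ (assuming some nonzero noise) keeps $\aleph$ well defined.
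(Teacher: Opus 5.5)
Your proposal is correct and is essentially the paper's proof: substitute $\eta=0.5$ and $\mu_d=0$ into (\ref{Proposition2-2}) and (\ref{GDOPT}), observe that both squared GDOPs become constant multiples of the same geometry factor $\operatorname{tr}(\mathbf{D}\mathbf{D}^\top)+\operatorname{tr}(\mathbf{D}\mathbf{L}\mathbf{D}^\top)$, and take the ratio. Your extra check that this common factor is strictly positive justifies the division, which the paper leaves implicit; it already follows from $\operatorname{tr}(\mathbf{D}\mathbf{D}^\top)>0$ under $\operatorname{rank}(\mathbf{C})=3$.
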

\begin{IEEEproof}[Proof]
	Bring $\eta = 0.5$, $\mu _{d} = 0$ into (\ref{Proposition2-2}) and (\ref{GDOPT}), we obtain
	\begin{equation} \label{cpllory2-1}
		\begin{aligned}
			& \operatorname{GDOP^2_{T}}(\mathbf{\hat{x}}_l, \mathbf{S}) \\
			&=  \left(0.5(c\sigma_{\tau})^2+\sigma_{p}^{2}\right) \left(\operatorname{tr}\left( \mathbf{D}\mathbf{D}^\top\right)+ \operatorname{tr}\left( \mathbf{D} {{\mathbf{L}}}\mathbf{D}^\top\right) \right),
		\end{aligned}
	\end{equation}
	and 
	\begin{equation} \label{cpllory2-2}
		\begin{aligned}
			& \operatorname{GDOP^2_{D}}(\mathbf{\hat{x}}_l, \mathbf{S}) \\
			&=  \left(0.5(c\sigma_{\tau})^2+\sigma_{p}^{2}+\sigma _{d}^{2}\right) \left(\operatorname{tr}\left( \mathbf{D}\mathbf{D}^\top\right)+ \operatorname{tr}\left( \mathbf{D} {{\mathbf{L}}}\mathbf{D}^\top\right) \right).
		\end{aligned}
	\end{equation}
	Combining (\ref{cpllory2-1}) and (\ref{cpllory2-2}), we yield that
	\begin{equation}
		\frac{\operatorname{GDOP_{D}}(\mathbf{\hat{x}}_l, \mathbf{S})}{\operatorname{GDOP_{T}}(\mathbf{\hat{x}}_l, \mathbf{S})} = \sqrt{\frac{0.5(c\sigma_{\tau})^2+\sigma_{p}^{2}+\sigma _{d}^{2}}{0.5(c\sigma_{\tau})^2+\sigma_{p}^{2}}}=\aleph.
	\end{equation}
\end{IEEEproof}
{System implication: The result in Corollary 2 reveals a critical boundary case. Under these specific conditions, the spatial distribution of GDOP  remains unchanged. Consequently, a deployment optimized for $\operatorname{GDOP_T}$ is also optimal for $\operatorname{GDOP_D}$, merely suffering a uniform performance penalty. This corollary sharply delineates the conditions under which drift-agnostic approaches may remain valid, and underscores that any deviation from $\mu_d=0$ or $\eta=0.5$ breaks this invariance, necessitating our generalized metric.}
\begin{thm}\label{thm3} Mean‑driven non‑uniform scaling of GDOP with $\eta = 0.5$. If $\eta = 0.5$, the ratio
	$\operatorname{GDOP_{D}}(\mathbf{\hat{x}}_l, \mathbf{S})/ \operatorname{GDOP_{T}}(\mathbf{\hat{x}}_l, \mathbf{S})$ decreases monotonically  as $ \operatorname{GDOP_{T}}(\mathbf{\hat{x}}_l, \mathbf{S})$ increases.
\end{thm}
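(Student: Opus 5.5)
With $\eta=0.5$ but a general mean $\mu_d$, I will express $\operatorname{GDOP^2_D}$ as an explicit function of $\operatorname{GDOP^2_T}$ and then differentiate the ratio. To keep the formulas short, set
\begin{equation*}
A \triangleq 0.5(c\sigma_{\tau})^2+\sigma_{p}^{2}, \qquad T \triangleq \operatorname{tr}\left( \mathbf{D}\mathbf{D}^\top\right)+\operatorname{tr}\left( \mathbf{D}\mathbf{L}\mathbf{D}^\top\right).
\end{equation*}

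Substituting $\eta=0.5$ into (\ref{GDOPT}) gives, as in (\ref{cpllory2-1}), $\operatorname{GDOP^2_T}=A\,T$. Substituting it into (\ref{Proposition2-2}) gives $\operatorname{GDOP^2_D}=(A+\sigma_d^2)\,T+3\mu_d^2$. By (\ref{DD}) and (\ref{DDds}) in the proof of Corollary \ref{thm1}, $T>0$. Since $A>0$ whenever $\sigma_\tau$ or $\sigma_p$ is nonzero, we can write $T=\operatorname{GDOP^2_T}/A$. Writing $g=\operatorname{GDOP_T}>0$, this yields
\begin{equation*}
\frac{\operatorname{GDOP^2_D}}{\operatorname{GDOP^2_T}} = \frac{A+\sigma_d^2}{A} + \frac{3\mu_d^2}{g^2} = \aleph^2 + \frac{3\mu_d^2}{g^2},
\end{equation*}
where $\aleph$ is the constant from Corollary \ref{thm2}.

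The ratio $\operatorname{GDOP_D}/\operatorname{GDOP_T}$ is therefore the square root of $\aleph^2 + 3\mu_d^2 g^{-2}$. Both $\sqrt{\cdot}$ and $g\mapsto g^{-2}$ are monotone on the positive reals, so the ratio is a strictly decreasing function of $g$ whenever $\mu_d\neq 0$. As $g\to\infty$ it tends to $\aleph$ from above. As $g\to 0^+$ it blows up, which formalizes the claim that high-accuracy regions are the most sensitive to drift.

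The statement involves no real obstacle. The points that need care are:
\begin{itemize}
\item the ratio really is a function of $\operatorname{GDOP_T}$ alone, because at $\eta=0.5$ both GDOPs depend on the geometry only through the single scalar $T$;
\item $A>0$ is required, so that $T$ can be solved for;
\item the case $\mu_d=0$ is degenerate: the ratio is then the constant $\aleph$ (Corollary \ref{thm2}), so the monotone decrease is strict only when $\mu_d\neq 0$. I will state the result as non-increasing in general and strictly decreasing for $\mu_d\neq0$.
\end{itemize}
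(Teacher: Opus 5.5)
Your proposal is correct and matches the paper's proof: substitute $\eta=0.5$ into (\ref{Proposition2-2}) and (\ref{GDOPT}), then eliminate the common geometric factor $\operatorname{tr}(\mathbf{D}\mathbf{D}^\top)+\operatorname{tr}(\mathbf{D}\mathbf{L}\mathbf{D}^\top)$ to get $\operatorname{GDOP_D}/\operatorname{GDOP_T}=\sqrt{\aleph^2+3\mu_d^2/\operatorname{GDOP^2_T}}$. Your extra points make explicit what the paper leaves implicit: you need $0.5(c\sigma_\tau)^2+\sigma_p^2>0$, and the decrease is strict only when $\mu_d\neq 0$, since for $\mu_d=0$ the ratio is the constant $\aleph$.
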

\begin{IEEEproof}[Proof]
	Bringing  $\eta = 0.5$ into (\ref{Proposition2-2}) and (\ref{GDOPT}), we  get
	\begin{equation} \label{cpllory3-1}
		\begin{aligned}
			& \operatorname{{GDOP}^2_{T}}(\mathbf{\hat{x}}_l, \mathbf{S}) \\
			&=  \left(0.5(c\sigma_{\tau})^2+\sigma_{p}^{2}\right) \left(\operatorname{tr}\left( \mathbf{D}\mathbf{D}^\top\right)+ \operatorname{tr}\left( \mathbf{D} {{\mathbf{L}}}\mathbf{D}^\top\right) \right),
		\end{aligned}
	\end{equation}
	and 
	\begin{equation} \label{cpllory3-2}
		\begin{aligned}
			&  \operatorname{{GDOP}^2_{D}}(\mathbf{\hat{x}}_l, \mathbf{S}) \\
			&=  \left(0.5(c\sigma_{\tau})^2+\sigma_{p}^{2}+\sigma _{d}^{2}\right) \left(\operatorname{tr}\left( \mathbf{D}\mathbf{D}^\top\right)+ \operatorname{tr}\left( \mathbf{D} {{\mathbf{L}}}\mathbf{D}^\top\right) \right)\\
			& + 3\mu^2 _{d}.
		\end{aligned}
	\end{equation}
	Combining (\ref{cpllory3-1}) and (\ref{cpllory3-2}), we obtain
	\begin{equation}
		\frac{\operatorname{GDOP_D}(\mathbf{\hat{x}}_l, \mathbf{S})}{\operatorname{GDOP_{T}}(\mathbf{\hat{x}}_l, \mathbf{S})} = \sqrt{\aleph^2 + \frac{3\mu^2 _{d}}{\operatorname{GDOP^2_{T}}(\mathbf{\hat{x}}_l, \mathbf{S})}}.
	\end{equation}
	
	Corollary 3 implies that
	\begin{itemize}
		\item[$\bullet$] Performance in low-$\operatorname{{GDOP}_{T}}(\mathbf{\hat{x}}_l, \mathbf{S})$ regions deteriorates more severely.
		\item[$\bullet$] The ratio in large-$\operatorname{{GDOP}_{T}}(\mathbf{\hat{x}}_l, \mathbf{S})$ regions tends toward a constant $\aleph$, especially when $3\mu^2 _{d} \ll \operatorname{{GDOP}^2_{T}}(\mathbf{\hat{x}}_l, \mathbf{S})$.
	\end{itemize}
\end{IEEEproof}
{System implication: The result in Corollary 3 is a counterintuitive and pivotal finding. It states that areas originally designed for the highest localization accuracy (low $\operatorname{GDOP_{T}}$) experience the most severe relative degradation under biased drift. This may undermine a common design philosophy of optimizing for average or peak accuracy. It provides a direct mathematical rationale for our min-max optimization objective (Section II-D): to ensure robust coverage, we must explicitly guard against the worst-case performance.}

Collectively, these theoretical results move beyond the intuitive notion that ‘drift error hurts localization performance.’ They quantify how drift distorts the error geometry—through non-uniform scaling, contour warping, and saturation—providing the precise mathematical foundation that necessitates and guides the formulation of our drift-robust deployment optimization problem.

\subsection{Sensor Deployment Optimization Problem}\label{optimization model}
In this section, we present the sensor deployment optimization model. To enable the sensors to effectively serve the ROI $\mathcal{B}$, we optimize their positions to minimize the maximum GDOP within the ROI. Specifically, the optimization problem is modeled as
\begin{equation} \label{model-1}
	\begin{aligned}
		&\quad \quad \quad \quad \min_{\mathbf{S}} \operatorname{GDOP^{max}_D} \\
		&\operatorname{GDOP^{max}_D} = \max_{\forall \mathbf{\hat{x}}_l \in \mathcal{B}} \operatorname{GDOP_D}(\mathbf{\hat{x}}_l, \mathbf{S})\\
		&	\quad \text{s.t.} \quad \operatorname{rank}(\mathbf{C}) = 3, \\
		&	\quad \quad \quad\mathbf{s}_m \in \mathcal{A}, \\
		& \quad\quad \quad	\forall m \in \{1, \ldots, M+1\}.
	\end{aligned}
\end{equation}
While the ROI is a bounded set, evaluating the objective $\text{GDOP}^{\max}_{\mathrm{D}}$ in (\ref{model-1}) requires finding the maximum over the entire continuous set $\mathcal{B}$. This poses a computational challenge. To render the problem tractable, we approximate the continuous ROI $\mathcal{B}$ by a finite set of $L$ sample points. This yields the problem (\ref{model-2}), which has a finite number of  operations:
\begin{equation} \label{model-2}
	\begin{aligned}
		&\quad \quad \quad \quad \min_{\mathbf{S}} \operatorname{GDOP^{max}_D} \\
		&\operatorname{GDOP^{max}_D} = \max_{l \in \{1, \ldots, L\}} \operatorname{GDOP_D}(\mathbf{\hat{x}}_l, \mathbf{S}) \\
		&\quad	\text{s.t.} \quad \operatorname{rank}(\mathbf{C}) = 3, \\
		&\quad	\quad \quad \mathbf{s}_m \in \mathcal{A}, \\
		&\quad	\quad\quad \forall m \in \{1, \ldots, M+1\}.
	\end{aligned}
\end{equation}

\section{{AUPSO: Solving Sensor deployment Optimization Problem}}\label{atspso}
In this section, we introduce the proposed AUPSO to solve the (\ref{model-2}). 
\subsection{The Shortcomings of Traditional PSO}
The traditional particle swarm optimization (TPSO) method \cite{kennedy1995particle} employs a single velocity update mechanism, in which each particle adjusts its velocity and position based on both its personal best position and the global best position. Let velocity, position, and personal best position of the $n$th particle be denoted by $\mathbf{J}_n$, $\mathbf{R}_{n}$, and $\mathbf{Pb}_n$, respectively.  The velocity and position update rules of the 
$n$th particle at the $k$th iteration can be specifically described by
\begin{equation}
	\begin{aligned} \label{39}
		\mathbf{J}^{k+1}_{n} =& w  \mathbf{J}^{k}_{n}+c_1 \xi_1 \left(\mathbf{Pb}^k_n-\mathbf{R}^k_{n}\right) +c_2 \xi_2  \left(\mathbf{G}^k-\mathbf{R}^k_{n}\right),\\
		& \quad \quad \quad \mathbf{R}^{k+1}_{n} = \mathbf{R}^{k}_{n} +\mathbf{J}^{k+1}_{n},		
	\end{aligned}
\end{equation}
where $w$ represents the inertia weight, $c_1$ and $c_2$ are the cognitive and social acceleration coefficients, respectively. $\xi_1$ and $\xi_2$ are two random numbers obeying a uniform distribution within the range $[0,1]$. $\mathbf{G}$ is the global best position identified by the entire swarm up to the $k$th iteration. 

Nevertheless, the TPSO suffers from two key limitations that affect its global search capability: 1) In each iteration, all particles are updated based on the global best position. However,  the global best position may be unreliable during early iterations,  leading to premature convergence \cite{kennedy1995particle}.
2) Each particle is guided by both its personal best position and the global best position, which tends to cause oscillatory behavior.
These limitations often result in suboptimal solutions with high variance across independent trials, thereby restricting the practical applicability of PSO.

\subsection{AUPSO}
To address these issues, we propose AUPSO. The detailed procedure is outlined below:

\textit{\textbf{Step 1. Initialization Stage:}} Let the maximum number of iterations be denoted by $K_{max}$, and initialize the current iteration index to $k=1$. The total number of particles is set as $N$. The position of the \textit{n}th particle is $\mathbf{R}_n = (\mathbf{{s}}_1,...,\mathbf{{s}}_{M+1})^\top$, which represents a vector of candidate sensor deployment scheme.   In this stage, all particle positions and velocities are initialized randomly within the deployment region by Latin hypercube sampling (LHS) \cite{loh1996latin}. 

\textit{\textbf{Step 2. Iteration Stage:}}
During this stage, particles iteratively search for the optimal sensor deployment scheme. This stage mainly involves two important substeps: velocity and position update, and personal/global best position update.

\textit{\textbf{Substep 2.1. Position and Velocity Update:}}
For the $n$th particle,  the velocity and position updates are given by
\begin{equation}\label{subswarms1}
	\begin{aligned}
		&\mathbf{J}^{k+1}_{n} = w  \mathbf{J}^{k}_{n}+c_1 \xi_1 \left(\beta\mathbf{Q}^k_{n}+(1-\beta)\mathbf{G}^k-\mathbf{R}^k_{n}\right),\\
		& \quad \quad	\quad\quad\quad\quad\mathbf{R}^{k+1}_{n} = \mathbf{R}^{k}_{n} +\mathbf{J}^{k+1}_{n},\\
	\end{aligned}
\end{equation}
where $\beta = 1-k/K_{max}$. $\mathbf{Q}^{k}_{n}$ are defined as
\begin{equation} \label{Q}
	\mathbf{Q}^{k}_{n}=\left\{ \begin{matrix}
		\mathbf{Pb}^k_{\zeta_1}, \textbf{if} \: \operatorname{GDOP^{max}_D}  \left( \mathbf{Pb}^k_{\zeta_1} \right)\le \operatorname{GDOP^{max}_D}  \left(  \mathbf{Pb}^k_{\zeta_2}  \right), \\
		\mathbf{Pb}^k_{\zeta_2}, \textbf{if} \: \operatorname{GDOP^{max}_D} \left( \mathbf{Pb}^k_{\zeta_1} \right)>\operatorname{GDOP^{max}_D}\left( \mathbf{Pb}^k_{\zeta_2} \right),  \\
	\end{matrix} \right.
\end{equation}
where $\mathbf{Pb}^k_{\zeta_1}$ and $\mathbf{Pb}^k_{\zeta_2}$ are randomly selected from the swarm. 

For each sensor $m \in {1,...,M+1}$, 
\begin{equation}\label{QQ}
	\begin{aligned}
		\text{if} \quad  {\mathbf{{s}}_{m}} \notin \mathcal{A}, \text{then,  re-sample} \quad  \mathbf{{s}}_{m} \quad  \text{from} \quad  \mathcal{A} \quad \text{such that} \\
		\mathbf{{s}}_{m} \in \mathcal{A}, \forall m \in {1,...,M+1}.
	\end{aligned}
\end{equation}

\textit{\textbf{Substep 2.2. Update of Personal Best Position and Global Best Position:}} 
For each particle, the personal best position $\mathbf{Pb}^k_{n}$ is updated  as follows:
\begin{equation} \label{personal}
	\mathbf{Pb}_{n}^{k+1}=\left\{ \begin{matrix}
		\mathbf{R}_{n}^{k+1},\textbf{if} \: \operatorname{GDOP^{max}_D} \left( \mathbf{R}_{n}^{k+1} \right)\le \operatorname{GDOP^{max}_D} \left( \mathbf{Pb}_{n}^{k} \right), \\
		\mathbf{Pb}_{n}^{k}, \textbf{if} \: \operatorname{GDOP^{max}_D} \left( \mathbf{R}_{n}^{k+1} \right)>\operatorname{GDOP^{max}_D}\left( \mathbf{Pb}_{n}^{k} \right).  \\
	\end{matrix} \right.
\end{equation}
The global best position $\mathbf{G}^{k+1}$  is then updated by selecting the best one from all the personal best positions, described by
\begin{equation} \label{G}
	{\mathbf{G}^{k+1}}=\arg \min \left\{ \operatorname{GDOP^{max}_D} \left( \mathbf{Pb}_{n}^{k+1} \right),n=1,...,N \right\}.
\end{equation}

\textit{\textbf{Step 3. Termination Check:}} 
The iteration process terminates when $k > K_{max}$, and meanwhile, the global best position $\mathbf{G}$ is output as the optimal deployment scheme. Otherwise, the iteration continues with $k=k+1$.

The Pseudocode of AUPSO is shown in Algorithm. \ref{ATSPSO}.

\begin{algorithm}[!t]
	\renewcommand{\algorithmicrequire}{\textbf{Input:}}
	\renewcommand{\algorithmicensure}{\textbf{Output:}}
	\caption{{\textbf{:} Pseudocode of AUPSO}}
	\label{ATSPSO}
	\begin{algorithmic}[1]
		\Require Optimized problem (\ref{model-2}); deployment region (search space) $\mathcal{A}$; the maximum number of iterations  $K_{max}$; 
		\Ensure  $\mathbf{G}$; 	\\
		/*  Initialization */
		\State 	$k=1$; 
		\For{1 $\leq$ \textit{n} $\leq$ \textit{N}}
		\State Randomly initialize $\mathbf{R}^1_n$ and $\mathbf{J}^1_n$, and evaluate $\operatorname{GDOP^{max}_D} (\mathbf{R}^1_n)$;
		\State Set $\mathbf{Pb}^1_{n}={\mathbf{R}^1_n}$ and update $\mathbf{G}^{1}$ by (\ref{G});
		\EndFor
		\State 	$k=k+1$; \\
		/* Main Loop */
		\While{$k \leq K_{max}$}
		\For{1 $\leq$ \textit{n} $\leq$ \textit{N}}
		\State Randomly select two particles from the swarm, i.e., $\mathbf{Pb}^{k}_{\zeta_1}$, $\mathbf{Pb}^{k}_{\zeta_2}$, and calculate 	$\mathbf{Q}^{k}_{n}$ by (\ref{Q});
		\State Update the velocity and position by (\ref{subswarms1}) and (\ref{QQ});
		\State Evaluate the fitness  of each particle;
		\State $k=k+1$;
		\State Update $\mathbf{Pb}^{k+1}_{n}$ and $\mathbf{G}^{{k+1}}$ by (\ref{personal}) and (\ref{G}), respectively;
		\EndFor
		\EndWhile
		\State \Return  $\mathbf{G}$.
	\end{algorithmic}
\end{algorithm}

\subsection{Convergence Rate Analysis}
{According to the analysis of the PSO convergence rate as indicated
in \cite{trelea2003particle}, it is generally simplified as a 1-D
problem with a single particle. } The update formulas (\ref{subswarms1}) of AUPSO can be simplified and written in the form of a matrix as
\begin{equation} 
	\begin{aligned}
		\begin{bmatrix} 1 & 0 \\ -1 & 1 \end{bmatrix} \begin{bmatrix} {J}_n^{k+1} \\ {R}_n^{k+1} \end{bmatrix} = & \begin{bmatrix} \omega & -c_1 \xi_1 \\ 0 & 1 \end{bmatrix} \begin{bmatrix} {J}_n^{k} \\ {R}_n^{k} \end{bmatrix} + \begin{bmatrix} c_1  \xi_1  {O_n} \\ 0 \end{bmatrix},
	\end{aligned}
\end{equation}
{where $ {O_n} = \left(1-\beta\right){Q_n}+\beta{G}$. Symbols ${J}_n$, $R_n$, $O_n$, $Q_n$, and $G$ represent the elements  of matrices $\mathbf{J}_n$, $\mathbf{R}_n$, $\mathbf{O}_n$, $\mathbf{Q}_n$, and $\mathbf{G}$ at a position $(i,j)$, respectively. For the local exploitation at its later stage, when AUPSO converges, there is a fixed point $X_n^{*} = ({J}_n^{*}, {R}_n^{*}$) that satisfies \cite{cao2018comprehensive}}
\begin{equation} 
	\begin{aligned}
		\begin{bmatrix} 1 & 0 \\ -1 & 1 \end{bmatrix} \begin{bmatrix} {J}_n^{*} \\ {R}_n^{*} \end{bmatrix} = & \begin{bmatrix} \omega & -c_1 \xi_1  \\ 0 & 1 \end{bmatrix} \begin{bmatrix} {J}_n^{*} \\ {R}_n^{*} \end{bmatrix}  + \begin{bmatrix} c_1  \xi_1  {O_n} \\ 0 \end{bmatrix}
	\end{aligned}.
\end{equation}
Define ${X_n} = [{J_n} \quad {R_n}]^{\top}$ and  $\Delta{X_n} = {X}_n^{k} - {X}_n^{*}$. We have 
\begin{equation} 
	\Delta{X}_n^{k+1} = \mathbf{Y} \Delta {X}_n^{k},
\end{equation}
where $\mathbf{Y} = 	\begin{bmatrix} w & -c_1 \xi_1 \\ w & 1-c_1 \xi_1 \end{bmatrix} $.
Then, according to the definition of matrix compatibility norm, we can derive
\begin{equation} 
	\frac{\| \Delta {X}_n^{k+1} \|_1}{\| \Delta {X}_n^{k} \|_1} \leq \| \mathbf{Y} \|_1,
\end{equation}
where $\| \mathbf{Y} \|_1 = \max\{2 w, 1 - 2  c_1  \xi_1\}$. This shows that the norm ratio of particle location deviation $\| \Delta X_n \|$ between two iterations
is a nonzero constant, which indicates that AUPSO has a linear convergence rate, which agrees with the analysis results of
other PSOs \cite{trelea2003particle}. Although it does not have a higher order convergence speed like Newton's method, it provides strong global search capabilities.

\subsection{Computational Complexity of AUPSO} \label{Computational Complexity of AUPSO}
The analysis begins with TPSO. the complexity order of TPSO in (\ref{39}) is
\begin{equation} 
	\mathcal{O}(N(M+1)K_{max}).
\end{equation}
For AUPSO, the complexity order is
\begin{equation} 
	\mathcal{O}(N(M+1)K_{max}).
\end{equation}
Clearly, both TPSO and AUPSO share the same complexity order, ensuring that AUPSO does not introduce additional computational overhead.  

{The proposed AUPSO involves iterative evaluation of the $\operatorname{GDOP_D}$ metric across the ROI and the swarm. Its computational complexity is $\mathcal{O}(N(M+1)K_{max})$, which is linear in the key problem parameters. While this implies a non-negligible computational cost, two critical aspects make it suitable for the considered application in this paper:}

Offline Planning Phase: The optimization is performed once, during the pre-deployment system design phase. The purpose of this paper is to provide the sensors with a robust commanded standby deployment scheme for ROI  localization under drift error.  Therefore, the method's runtime (which may range from seconds to minutes on a standard desktop) does not conflict with the online requirements of target localization.

Practical Acceleration Paths: For very large-scale problems, the computation can be significantly accelerated through parallel evaluation of particles, code optimization, or leveraging GPU computing for the matrix operations inherent in $\operatorname{GDOP_D}$ calculation.

\section{ Simulation Results} \label{Experimental Simulation}
{This section includes three main parts: 1) Validation of $\operatorname{GDOP_D}$ characteristics, which is conducted to support the theoretical findings outlined in Section \ref{Analysis of the characteristics}, 2) GDOP Performance evaluation of the proposed method, and  3) RMSE performance of the proposed method under two drift-error distributions.}

\begin{table*}[!t]
	\caption{{Simulation Parameters for validation of $\operatorname{{GDOP}_{D}}$ Characteristics}}\label{table1:parameters}
	\centering
	\begin{tabular}{ccc}
		\toprule \multicolumn{2}{c}{Parameters} & Values \\
		\midrule
		\multirow{11}{*} {Shared parameters} & Position of sensor 1 ($\mathbf{{s}}_{1}$) & (-30km, -20km, 10km)$^\top$   \\ 
		& Position of  sensor 2 ($\mathbf{{s}}_{2}$) & (50km, -25km, 10km)$^\top$  \\ 
		& Position of sensor 3 ($\mathbf{{s}}_{3}$) & (-10km, 40km, 10km)$^\top$  \\ 
		& Position of sensor 4 ($\mathbf{{s}}_{4}$) & (21.2km, 21.2km, 10km)$^\top$  \\ 
		& \multirow{3}{*}{Region of interest $\mathcal{B}$} & x-direction: [-15km, 10km] \\
		&	& y-direction: [-15km, 10km] \\
		&	& z-direction: 0km \\ 
		&	The size of resolution cells & 50m$\times$50m   \\ 
		&  The light speed $c$ & $3\times10^{8}$m/s\\
		&	\multirow{2}{*}{Position measurement error}  & Mean: 0m \\ 
		&	& Standard deviation ($\sigma _{p}$): 1m \\ 
		&	\multirow{2}{*}{Time-difference measurement error}	 & Mean: 0s \\
		& 	& Standard deviation ($\sigma _{\tau}$): 10ns \\ 
		& Directional correlation coefficients of the drift error of the $m$th sensor &	$\boldsymbol{\bm{\rho}}_{m} = [0, 0,0]^\top$\\	
		\cmidrule(lr){2-3}
		\multirow{2}{*} {Corollary 2 parameters}	& Correlation coefficient of TDOA measurement errors ($\eta$) & 0.5 \\ 
		&{Drift error}	 & $\mu_{d} = 0$, ${\sigma}_{d} = 1 $ \\ 
		\multirow{2}{*} {Corollary 3 parameters}	& Correlation coefficient of TDOA measurement errors ($\eta$) & 0.5 \\ 
		&{Drift error}	 & $\mu_{d} = 1$, ${\sigma}_{d} = 1 $\\ 
		\multirow{2}{*} {Corollary 4 parameters}	& Correlation coefficient of TDOA measurement errors ($\eta$) & 0.95 \\ 
		&	{Drift error}	  & $\mu_{d} = 1$, ${\sigma}_{d} = 3 $\\ 
		\bottomrule
	\end{tabular}
\end{table*}

\begin{table}[h]
	\centering
	\caption{{Application Scenario Parameters}}
	\label{Scenario Parameters}
	\begin{tabular}{cc}
		\toprule
		Parameter & Value \\
		\midrule
		Number of sensors & 8\\ 
		\multirow{3}{*} {Deployment region $\mathcal{A}$} & x-direction: [-5km, 5km] \\
		& y-direction: [-5km, 5km] \\
		& z-direction: [10km, 20km]\\ 
		Region of interest $\mathcal{B}$ & Same as in Table \ref{table1:parameters}\\
		{{Position measurement error}}  & Same as in Table \ref{table1:parameters} \\ 
		{Time-difference measurement error}	 & Same as in  Table \ref{table1:parameters}  \\ 
		{Correlation coefficient ($\eta$)}	 & Same as in Corollary 4 in Table \ref{table1:parameters} \\ \hline
		\multirow{3}{*}{	Sensor 1}  & 	$\mathbf{u}^{dri} _{1} = [1.0, 0.5,0.2]^\top$\\
		& $\mathbf{\bm{\sigma}}^{dri} _{1} = [2.0, 1.5,1.2]^\top$\\
		& $\mathbf{\bm\rho}_{1} = [0.8,0.6,0.2]^\top$\\
		\multirow{3}{*}{	Sensor 2}  & 	$\mathbf{u}^{dri} _{2} = [0.8, 0.3,0.1]^\top$\\
		& $\mathbf{\bm\sigma}^{dri} _{2} = [1.8, 2.3,3.1]^\top$\\
		& $\mathbf{\bm\rho}_{2} = [0.5, 0.2, 0.6]^\top$\\
		\multirow{3}{*}{	Sensor 3}  & 	$\mathbf{u}^{dri} _{3} = [1.2, 0.7, 0.3]^\top$\\
		& $\mathbf{\bm\sigma}^{dri} _{3} = [1.2, 0.7, 0.3]^\top$\\
		& $\mathbf{\bm\rho}_{3} = [0.6, 0.8, 0.27]^\top$\\
		\multirow{3}{*}{	Sensor 4}  & 	$\mathbf{u}^{dri} _{4} = [1.05, 0.1, 1.5]^\top$\\
		& $\mathbf{\bm\sigma}^{dri} _{4} = [0.7, 1.5, 2.3]^\top$\\
		& $\mathbf{\bm\rho}_{4} = [0.4, 0.5, 0.1]^\top$\\
		\multirow{3}{*}{	Sensor 5}  & 	$\mathbf{u}^{dri} _{5} = [2.1, 0.9,1.8]^\top$\\
		& $\mathbf{\bm\sigma}^{dri} _{5} = [1.5,2.9,3.1]^\top$\\
		& $\mathbf{\bm\rho}_{5} = [0.7,0.5,0.65]^\top$\\
		\multirow{3}{*}{	Sensor 6}  & 	$\mathbf{u}^{dri} _{6} = [2.1, 1.8,2.5]^\top$\\
		& $\mathbf{\bm\sigma}^{dri} _{6} = [2.5,1.8,0.9]^\top$\\
		& $\mathbf{\bm\rho}_{6} = [0.9,0.43,0.75]^\top$\\
		\multirow{3}{*}{	Sensor 7}  & 	$\mathbf{u}^{dri} _{7} = [4.2,1.8,3.2]^\top$\\
		& $\mathbf{\bm\sigma}^{dri} _{7} = [1.5,2.8,3.6]^\top$\\
		& $\mathbf{\bm\rho}_{7} = [0.75,0.21,0.64]^\top$\\
		\multirow{3}{*}{	Sensor 8}  & 	$\mathbf{u}^{dri} _{8} = [2.1,0.9,1.5]^\top$\\
		& $\mathbf{\bm\sigma}^{dri} _{8} = [2.1,3.1,1.8]^\top$\\
		& $\mathbf{\bm\rho}_{8} = [0.13,0.24,0.35]^\top$\\
		\bottomrule
	\end{tabular}
\end{table}

\begin{table}[h]
	\centering
	\caption{Parameters of AUPSO}
	\label{Algorithmic Parameters}
	\begin{tabular}{cc}
		\toprule
		Parameter& Value \\
		\midrule
		Inertia weight ($w$) & 0.9-0.4 \cite{kennedy1995particle}  \\ 
		Cognitive acceleration coefficient ($c_1$) & 1.49455 \cite{kennedy1995particle}  \\
		\bottomrule
	\end{tabular}
\end{table}

\begin{table}[h]
	\caption{Statistical Results of GDOP under $\operatorname{GDOP^{max}_{T}}$ and $\operatorname{GDOP^{max}_{D}}$}
	\label{Statistical Results of GDOP under}
	\centering
	\begin{tabular}{ccc}
		\toprule
		~   & Max.$\operatorname{GDOP^{max}_{D}}$    & Ave.$\operatorname{GDOP^{max}_{D}}$   \\ 
		\midrule
		Optimized by $\operatorname{GDOP_{T}}$   & 1.59E+06 &  2.36E+05 \\ 
		Optimized by $\operatorname{GDOP_{D}}$ & \textbf{7.77E+03} &  \textbf{3.22E+02}\\ 
		\bottomrule
	\end{tabular}
\end{table}

\begin{table}[h]
	\caption{{Statistical Results of $\operatorname{GDOP^{max}_{D}}$ for Different methods}}
	\label{Statistical Results of GDOP}
	\centering
	\scriptsize
	\begin{tabular}{cccc}
		\toprule
		~   & Max.$\operatorname{GDOP^{max}_{D}}$    & Ave.$\operatorname{GDOP^{max}_{D}}$ &  Std.$\operatorname{GDOP^{max}_{D}}$  \\
		\midrule
		TPSO \cite{kennedy1995particle}  & 2.89E+04 &  2.73E+03 & 6.42E+03 \\ 
		AWPSO \cite{liu2019novel} & 1.97E+04 &  9.74E+02 & 3.12E+03 \\ 
		CHpPSO-ABS \cite{zhang2026complementary} &  5.10E+04 &  2.98E+03 & 7.92E+03\\
		\textbf{AUPSO} & \textbf{7.77E+03} &  \textbf{3.22E+02} & \textbf{1.08E+03} \\ 
		Improvement	ratio &\textbf{{153.55$\%$}} & \textbf{202.39$\%$} & \textbf{188.90$\%$}\\
		\bottomrule
	\end{tabular}
\end{table}

\begin{table}[h]
	\caption{{The average runtime  of different methods.}}
	\label{The average runtime}
	\centering
	\scriptsize
	\begin{tabular}{ccccc}
		\toprule
		~   & TPSO \cite{kennedy1995particle}    & 	AWPSO \cite{liu2019novel} &  	CHpPSO-ABS \cite{zhang2026complementary} & 	\textbf{AUPSO} \\
		\midrule
		Runtime (s)  & 2.10E+01 &  2.09E+01 & 2.09E+01 & 2.08E+01  \\ 
		\bottomrule
	\end{tabular}
\end{table}

\subsubsection{ {RMSE Performance of the Proposed Method Compared With Baseline Methods}}
In this subsection, we introduce uniform, LHS, SQP \cite{costa2025aggregation} methods alongside TPSO, AWPSO, and CHpPSO-ABS. The uniform method places eight sensors at the eight vertices of the deployment region $\mathcal{A}$. Moreover, SQP, TPSO, AWPSO, CHpPSO-ABS, and AUPSO equally generate $NK_{max}$ potential sensor deployment schemes in each Monte Carlo trial. To ensure a fair comparison, the LHS method similarly produces $NK_{max}$ schemes in each trial, collecting the deployment scheme corresponding to the minimum $\operatorname{GDOP^{max}_{D}}$. The RMSE performance corresponding to the sensor deployment schemes generated by all methods , with performance measured by the median of the $\operatorname{RMSE^{max}}$  (Med.$\operatorname{RMSE^{max}}$, reflecting worst-case localization error) and the median of the $\operatorname{RMSE^{mean}}$ (Med.$\operatorname{RMSE^{mean}}$, reflecting average localization accuracy) across the ROI.
	
The results illustrate the performance hierarchy among baselines. While  Uniform, LHS and SQP methods result in significantly higher localization errors, the PSO variants (TPSO, AWPSO, CHpPSO-ABS) perform considerably better. Among them, AWPSO generally ranks the second best, though its performance is still significantly and meaningfully outperformed by AUPSO. Besides, under both Gaussian and uniform drift-error distributions, the proposed AUPSO method consistently achieves the lowest values for both RMSE metrics, confirming its robust superiority. Specifically, for Med.$\operatorname{RMSE^{max}}$, AUPSO demonstrates an improvement ratio of $14.75\%$ over the best baseline (AWPSO) under a Gaussian distribution and a notable 19.77$\%$ under a uniform distribution. For Med.$\operatorname{RMSE^{mean}}$, the improvement over the best baseline (AWPSO) remains consistent at 10.66$\%$ for both distributions. The profound superiority of AUPSO is rigorously validated by the Wilcoxon rank-sum test. All reported $p$-values comparing AUPSO to each baseline are exceedingly small (e.g., on the order of 
$10^{-2}$  to $10^{-34}$), far below the significance level of 0.05. This provides overwhelming statistical evidence of the performance gains of AUPSO and shows that AUPSO generates sensor deployment schemes that yield significantly lower localization errors compared to all baseline methods. The simulation results also show that the gains are statistically robust to  error distributions, and quantitatively substantial, underscoring the effectiveness of the proposed method under drift-error environments.

\section{Conclusion} \label{Conclusion}
In this paper, we investigated the sensor deployment optimization problem for passive TDOA localization across the entire ROI under drift error. To tackle this challenge, we firstly introduced a generalized drift-aware $\operatorname{{GDOP}_{D}}$ metric, which is a generalization of the traditional $\operatorname{{GDOP}_{T}}$ metric. We then theoretically investigated the mathematical relationship between $\operatorname{GDOP_D}$ and $\operatorname{GDOP_T}$, and theoretically revealed the mechanism of how drift errors affect GDOP. We then formulated a non-convex min-max sensor deployment optimization model aimed at minimizing the worst-case $\operatorname{{GDOP}_{D}}$ across the ROI. Next, to solve this challenging problem, we proposed an AUPSO method, which alleviates the premature convergence and the oscillatory behavior of the traditional PSO. Extensive simulations validated the theoretical results and demonstrated the superiority of the proposed method in terms of localization accuracy.   

Based on this study, future work can focus on developing a  method to dynamically adjust sensor positions online to compensate for real-time drift errors.  Specifically, considering the time-varying attitude dynamics, we plan to formulate an online and real-time UAV attitude compensation and active trajectory adjustment framework. By coupling the offline robust topology baseline established in this paper with an online active attitude compensation mechanism, we aim to develop a fully closed-loop and high-precision UAV TDOA localization system under severe dynamic disturbances. In addition, extending the proposed deployment optimization framework to non-line-of-sight (NLOS) environments constitutes another important direction. Future work could incorporate terrain or building maps into the offline planning stage, enabling the optimization to favor sensor geometries that maintain redundant line-of-sight coverage in NLOS-prone sub-regions. Such an extension would integrate NLOS awareness into the geometric design and further enhance the robustness of the overall localization system.

\bibliographystyle{IEEEtran}
\bibliography{Main_Manuscript}
\vfill

\newpage

\end{document}